\newtheorem{theorem}{Theorem}
\theoremstyle{definition}
\theoremstyle{definition}
\theoremstyle{definition}\newtheorem{definition}[theorem]{Definition}
\theoremstyle{definition}
\theoremstyle{definition}
\theoremstyle{definition}
\theoremstyle{definition}
\newcommand{\bit}{\begin{itemize}}
\newcommand{\eit}{\end{itemize}\par\noindent}
\newcommand{\ben}{\begin{enumerate}}
\newcommand{\een}{\end{enumerate}\par\noindent}
\newcommand{\beq}{\begin{equation}}
\newcommand{\eeq}{\end{equation}\par\noindent}
\newcommand{\beqa}{\begin{eqnarray*}}
\newcommand{\eeqa}{\end{eqnarray*}\par\noindent}
\newcommand{\beqn}{\begin{eqnarray}}
\newcommand{\eeqn}{\end{eqnarray}\par\noindent}
\newcommand{\sm}[1]{\mbox{\small$#1$}}
\newcommand{\spc}{-0.1in}
\begin{document} 

\title{Time-asymmetry of probabilities versus relativistic causal structure: an arrow of time}%\vspace{-0.5mm}
\author{Bob Coecke}
\author{Raymond Lal
%\footnote{The authors are supported by an EPSRC Advanced Research Fellowship, a Clarendon Studentship, and EC-FP6 STREP QICS. }
}
\affiliation{University of Oxford, Department of Computer Science, Quantum Group%\vspace{-4mm} 
\\ Wolfson Building, Parks Road, Oxford, OX1 3QD, UK.
  }

\begin{abstract}
There is an incompatibility between the symmetries of causal structure in relativity theory and the signaling abilities of probabilistic devices with inputs and outputs: while time-reversal in relativity will not introduce the ability to signal  between spacelike separated regions, this is not the case for probabilistic devices  with space-like separated input-output pairs.  
%We show that as a consequence, honest agents playing a time-symmetric game may become dishonest agents when reversing time.  
We explicitly describe a non-signaling device which becomes a perfect signaling device under time-reversal, where  time-reversal can be conceptualized as playing backwards a videotape of an agent manipulating the device. This leads to an arrow of time that is identifiable when studying the correlations of events for spacelike separated regions. Somewhat surprisingly, although time-reversal  of 
%the maximally non-local but still non-signaling  
Popuscu-R\"orlich boxes also allows agents to signal,
%for agents \bB using \e, 
it does not yield  a perfect signaling device.  Finally, we 
realize time-reversal
%analyze time-reversal within the `logic of entanglement' framework and show how  this leads to an 
using post-selection, which could lead experimental 
implementation. 
\end{abstract}

\pacs{...}
\keywords{...}
\maketitle

%\section{Introduction}

The bounded speed of light in relativity theory restricts the ability of agents to signal to other agents.  For this reason, the study of general probabilistic theories \cite{Pitowski, Barrett} is typically restricted to those theories that do not enable signaling, of which classical and quantum probability are of course the main examples.  Due to the symmetries under time-reversal of relativity theory, namely that time-reversal does not introduce the ability to signal  between  space-like separated regions (see also our discussion below), one may expect that under time-reversal the non-signaling constraint would also be preserved by general probabilistic theories.  This turns out not to be the case, not even for classical probability theory.  This has important consequences  if one were to incorporate probabilistic features within causal structure \cite{Causaloid}, and provides important new input into the  debate on the arrow of time \cite{Halliwell}. We discuss this further at the end of this paper.

%\bR ... more??? ...\e
 
\section{The inconsistency}

Given a spacetime manifold, e.g.~Minkowski spacetime ${\cal M}$,  the  spacelike separation of two regions $A$ and $B$ implies that an agent Alice located in $A$ is  not able to signal to an agent Bob located in $B$, and vice versa. We denote this by $A\not\leq B$ and $B\not\leq A$.  If $A$ and $B$ are time-like separated then signaling is in principle possible, along the direction of time, but of course not backwards. Hence, in the case that $A$ causally precedes $B$ we have $A\leq B$ and $B\not\leq A$.  Now consider the time-reversed spacetime manifold ${\cal M}^{op}$ \footnote{Note that for general relativistic spacetime ${\cal G}$, the spacetime ${\cal G}^{op}$ is also a valid spacetime, since the space of solutions to Einstein's equations is closed under reversal of time-orientation, i.e.~the continuous assignment of `future' and `past' each to one half of the light cone at each point of the manifold; see \cite{Wald}, for further discussion).},  with respect to which  we denote the (in)ability of agents to signal by means of  $\leq^{op}$ and $\not\leq^{op}$.  The symmetry under time-reversal of ${\cal M}$ means that $A\leq B$ if and only if $B\leq^{op} A$ 
(i.e.~when $A$ can signal to $B$ in $\mathcal{M}$), 
while $A\not\leq B$ if and only if $B\not\leq^{op} A$ 
 (i.e.~when $A$ cannot signal to $B$ in $\mathcal{M}$). 
%when $A$ and $B$ are not causally related.  %So while the timelike ability to signal is \em inverted\em, the spacelike inability to signal is \em preserved\em.  
 In particular, for spacelike separated regions (for which $A\not\leq B$ and $B\not\leq A$), time-reversal does not introduce the ability to signal. This argument straightforwardly extends to more general globally hyperbolic spacetimes \footnote{The \em causal structure\em, i.e.~relationships of the kind $A\leq B$ and also $A <B$ between spacetime points, indeed captures much of the essence of relativity: e.g.~Malament \cite{Malament} showed that for globally hyperbolic spacetimes the differentiable structure and the conformal metric can be recovered from the causal structure. 
%and Zeeman \cite{Zeeman} showed that the Poincar\'{e} group can be defined as the group of automorphisms of Minkowski space that preserve the causal order on points (i.e.~spacetime events).
}.

Now we consider a device with two inputs $a_I$ and $b_I$ and two outputs $a_O$ and $b_O$, Alice and Bob each having access to one input and one output \footnote{Devices of this kind are 
%an 
important 
%topic of study 
in quantum foundations and quantum information; e.g.~for a Bell-type scenario the inputs constitute the choice of measurements and the outputs the measurement outcomes.}: 
\begin{figure}[H]
\centering
\includegraphics[width=120pt]{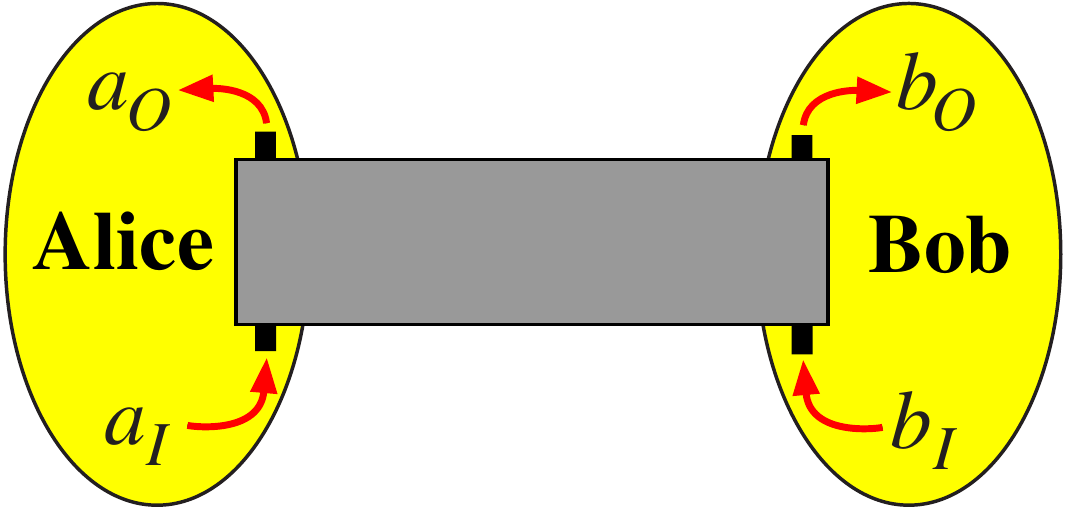}
\caption{Two-party probabilistic input-output box.}
%\label{fig:CTC}
\end{figure}
\vspace{\spc}
%\begin{center}
%\epsfig{figure=AliceBobsym.pdf,width=120pt}
%\end{center}
We assume that this device is non-signaling,  
%which \bR may be imposed by requiring it to be subject to the laws of physics, \e 
and that Alice and Bob remain spatially separated while interacting via the device.  We assume that the device can exhibit any non-signaling correlations: classical (including shared randomness), quantum or super-quantum.
%but it may  contain correlated resources that have been prepared at some point in the common causal past. \e   
To such a device we associate a time-reversed device by exchanging the roles of the inputs and the outputs; in the next section we discuss how this may be  conceived of operationally. 
We will show that there exist devices for which time-reversal  turns a non-signaling device into a perfect-signaling device. 

So, a \em non-channel in one time direction \em becomes a \em perfect channel in the other direction\em, contra the time-reversal symmetry of relativity discussed above.

One critical issue here is the precise description of time-reversal for such a probabilistic device.  Following standard probability theory one can turn a stochastic matrix with $I$ as givens and $O$ as conclusions into one which has $O$ as givens and $I$ as conclusions via \em Bayesian inversion\em:
\[
P(I|O)={P(O|I) P(I)\over P(O)}.
\]
However this requires  knowing the prior probability distribution $P(I)$, which is supposed to be a free choice by the agents.  We circumvent this issue: our results do not depend whatsoever on the choice of prior.  
%\bR This includes the quantitative statement about the channel capacity realized by time-reversing a non-signaling device, as well as the strictness of the higher channel capacity achieved in this manner as compared to when on time-reverses a Popuscu-R\"orlich (PR) box \cite{PR}, which also enables signaling\e.  
More specifically, we will show that only the \em possibility \em of the occurrence of correlations matters, rather than their probability. Then  time-reversal has a unique characterization.

\section{`Detecting' the arrow of time}

We now provide an intuitive operational conception of time-reversal for certain specific situations, which merely consists of rewinding a videotape.  Consider the following multi-agentscenario \footnote{This particular scenario was suggested to us by Jamie Vicary; it improved on one proposed by us which involved illuminated buttons.}.  Each agent has a device with a slot: when one puts a card into the slot, it returns a card.  We consider inserting a card as an input process and retrieving a card as an output process.    The symmetry of the situation now allows for exchanging the roles of the input process and the output process by time-reversal:

\begin{figure}[H]
\centering
\includegraphics[width=240pt]{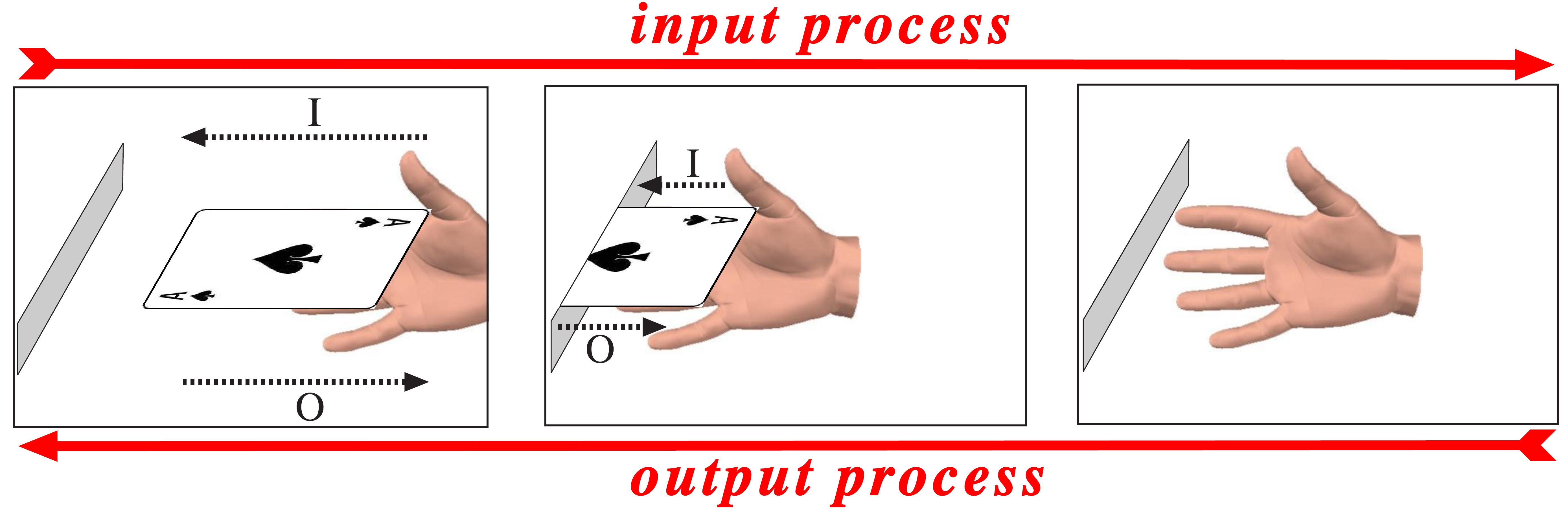}
\caption{Exchange of input and output by time-reversal.}
\label{fig:HandsCard}
\end{figure}
\vspace{\spc}
%\begin{center} 
%\epsfig{figure=HandsAndCard.pdf,width=240pt}
%\end{center}
While the devices are not connected, 
they %have an internal clock  and 
may expose correlations in their behavior which were encoded at the time of their manufacturing. 

Assume that one videotapes the input-output process for a number of rounds that is  sufficient for statistical purposes.  
When one plays the tape backwards, the statistics will now be the one obtained via Bayesian inversion. Analysis of the correlations exposes the fact that these devices enable perfect signaling, despite the fact that the agents may be located outside of each other's light cones. 

We can conclude that merely  by studying correlations one can `detect' the backward direction of time: this is the direction in which \em there exist devices that potentially enable signaling between space-like separated regions\em.  We discuss this issue in more detail at the end of this paper.

\section{Signaling from time-reversal}

Assume that for a two-party I/O-device as described above, $a_I$, $b_I$, $a_O$ and $b_O$ take values in $\{0, 1\}$. We can assign a 4-by-4 \em (probabilistic) correlation matrix \em $g=(g_{a_I, b_I}^{a_O, b_O})$  of which the entries give the probability of obtaining output pair $(a_O, b_O)$ given input pair $(a_I, b_I)$.  

\begin{definition}\label{def:signaling}
A correlation matrix enables  \em  signaling  from Alice to Bob \em iff
\beq\label{eq:probabilistic}
\exists (b_I,b_O):g_{0, b_I}^{0, b_O}+g_{0, b_I}^{1, b_O}\not=g_{1, b_I}^{0, b_O}+g_{1, b_I}^{1, b_O}\,.
\eeq
\end{definition}
The sums reflect that fact that the value of Alice's output is not known to Bob, and hence is traced out.  So by  signaling  we mean that, from his input-output pairs, and after a sufficient number of rounds (i.e.~in the statistical limit), Bob  has obtained information about Alice's sequence of inputs. All  correlation matrices that we will consider here will be invariant under exchange of the roles of Alice and Bob, hence for simplicity we can restrict ourselves to considering only (non-)signaling from Alice to Bob. 

\begin{definition}
A  correlation matrix  $g$ is \em classical \em if there exist 2-by-2 stochastic matrices $\{g_i(a)\}$ and $\{g_i(b)\}$ \footnote{I.e.~$[0,1]$-valued  entries which in each column sum to $1$.} and $\{P_i\}$  for which $P_i\in[0,1]$ and $\sum_i P_i = 1$, such that  $g$ decomposes as follows \footnote{In the quantum foundations literature this is usually referred to as a \em local hidden variable representation\em.}:
\beq
g=\sum_i P_i g_i(a)\times g_i(b)
\eeq
\end{definition}

It is well-known that a classical  correlation matrix does not enable signaling  from Alice to Bob. Now, given $g$ and  prior $P(I)$, we rely on  Bayesian inversion to construct the time-reverse $g_{P(I)}^T$, explicitly, 
\beq
(g_{P(I)}^T)^{a_I, b_I}_{a_O, b_O}= {g_{a_I, b_I}^{a_O, b_O}  \times  (P(I))_{a_I, b_I} \over (P(O))^{a_O, b_O}}
\eeq
where 
\beq\label{eq:posterior}
(P(O))^{a_O, b_O}= \sum_{a_I, b_I} g_{a_I, b_I}^{a_O, b_O} (P(I))_{a_I, b_I}\,.  
\eeq
The variables $a_O$ and $b_O$ are now treated as the inputs and the variables $a_I$ and $b_I$ are now treated as the outputs. By \em perfect signaling \em we mean that Bob receives Alice's input as its output with certainty.  We call a prior \em total \em if it has no zero entries.  

\begin{theorem}\label{thm:classical}
There exist classical correlation matrices for which the time-reverse for any total prior is signaling. More specifically, each such time reverse of 
\beq
\tilde{g}=\left(\begin{array}{cccc}
{1\over 4} & 0 & {1\over 4} & 0\vspace{1mm}\\
{1\over 4} & {1\over 2} & {1\over 4} & {1\over 2}\vspace{1mm}\\
0 & {1\over 4} & 0 & {1\over 4}\vspace{1mm}\\
{1\over 2} & {1\over 4} & {1\over 2} & {1\over 4}
\end{array}\right)
\eeq
enables perfect signaling from Alice to Bob, which is achieved when Bob fixes his input to always be $0$.
\end{theorem}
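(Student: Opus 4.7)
The plan is to verify that $\tilde{g}$ is classical (hence non-signaling) and then to show that its Bayesian inverse with any total prior exhibits perfect signaling when Bob fixes his reversed input to $0$. Throughout I index rows by output pairs $(a_O, b_O)$ and columns by input pairs $(a_I, b_I)$ in lexicographic order, which makes the columns sum to $1$. Classicality is established by exhibiting a local hidden-variable model. The key structural feature is that columns $1$ and $3$ coincide and columns $2$ and $4$ coincide, so the output distribution depends only on $b_I$; this lets me take a shared bit $\lambda$ uniform on $\{0,1\}$, Alice's response $A_\lambda(a_O|a_I) = \delta_{a_O,\lambda}$ (independent of $a_I$), and Bob's response $B_\lambda(b_O|b_I)$ equal to the uniform distribution on $\{0,1\}$ when $b_I = \lambda$ and to $\delta_{b_O, 1}$ when $b_I \neq \lambda$. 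An entry-by-entry check reproduces $\tilde{g}$.

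The crux of the signaling claim is the observation that Bayesian inversion with a total prior preserves the support: $(g^T_{P(I)})^{a_I, b_I}_{a_O, b_O} > 0$ if and only if $\tilde{g}_{a_I, b_I}^{a_O, b_O} > 0$. This is immediate from the inversion formula once the prior is strictly positive (so the numerator vanishes iff the corresponding entry of $\tilde{g}$ vanishes, and the normalizer is finite and nonzero). Consequently the perfect-signaling property can be read off from the zero pattern of $\tilde{g}$ alone, with no dependence on $P(I)$. Fixing Bob's reversed input $b_O = 0$, I examine the two relevant rows of $\tilde{g}$: row $(a_O, b_O) = (0, 0)$ has nonzero entries precisely at $(a_I, b_I) \in \{(0, 0), (1, 0)\}$, both with $b_I = 0$, while row $(a_O, b_O) = (1, 0)$ has nonzero entries at $\{(0, 1), (1, 1)\}$, both with $b_I = 1$. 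Therefore, in the time-reversed device conditioned on $b_O = 0$, the reversed output $b_I$ equals the reversed input $a_O$ with probability $1$. Substituting this into Definition~\ref{def:signaling} applied to $g^T_{P(I)}$, with the roles of inputs and outputs interchanged, yields perfect signaling from Alice to Bob.

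The main obstacle is essentially notational: keeping straight which variables play the role of inputs and which of outputs in the time-reversed device, and matching the conclusion to the exact form of the signaling inequality (tracing out Alice's reversed output $a_I$ before comparing Bob's marginals at different values of Alice's reversed input $a_O$). Once this bookkeeping is settled, the substantive content is that totality of the prior plus the zero pattern of $\tilde{g}$ together carry the entire result, without any reference to the numerical values of the nonzero entries of $\tilde{g}$ or of $P(I)$.
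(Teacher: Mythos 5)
Your proposal is correct and follows essentially the same route as the paper: you exhibit an explicit local decomposition of $\tilde{g}$ and then note that, for any total prior, the Bayesian inverse inherits the zero pattern of $\tilde{g}$ (which is precisely what underlies the paper's parametrized form of $\tilde{g}^T_{P(I)}$), so that with Bob's reversed input fixed to $0$ his output $b_I$ equals Alice's input $a_O$ with certainty. Your two-value hidden-variable model with a stochastic response for Bob is just a regrouping of the paper's four-term product decomposition, so the differences are cosmetic.
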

\begin{proof}
First we observe that $\tilde{g}$ is indeed classical:
\beqa
\tilde{g}&=&{1\over 4}\sm{\left(\begin{array}{cc}
1&  1\\
0&  0
\end{array}\right)}
\!\times\!
\sm{\left(\begin{array}{cc}
1&  0\\
0&  1
\end{array}\right)}
\!+\!
{1\over 4}\sm{\left(\begin{array}{cc}
1& 1\\
0& 0
\end{array}\right)}
\!\times\!
\sm{\left(\begin{array}{cc}
0&  0\\
1&  1
\end{array}\right)}
%\vspace{2mm}\\
%&&\hspace{-1.8cm}
\\
&&
+
{1\over 4}\sm{\left(\begin{array}{cc}
0&  0\\
1&  1
\end{array}\right)}
\!\times\!
\sm{\left(\begin{array}{cc}
0&  1\\
1&  0
\end{array}\right)}
+
{1\over 4}\sm{\left(\begin{array}{cc}
0&  0\\
1&  1
\end{array}\right)}
\!\times\!
\sm{\left(\begin{array}{cc}
0& 0\\
1& 1
\end{array}\right)}
\eeqa
and hence, in particular, $\tilde{g}$ is non-signaling.  In order to establish that $\tilde{g}_{P(I)}^T$ enables perfect signaling, note that for any $P(I)$ it will always have the form:
\beq
\tilde{g}_{P(I)}^T=\left(\begin{array}{cccc}
a & b & 0 & f\\
0 & c & e & g\\
\!1\mbox{-}a & d & 0 & h\\
0 & \!1\mbox{-}b\mbox{-}c\mbox{-}d & 1\mbox{-}e & \!1\mbox{-}f\mbox{-}g\mbox{-}h\!
\end{array}\right)
\eeq
Assume that Bob fixes his input to be $0$. Then when Alice's input is $0$ the output will be $(0,0)$ with probability $a$ and it will be $(1,0)$ with probability $1-a$, and when Alice's input is $1$ the output will be $(0,1)$ with probability $e$ and it will be $(1,1)$ with probability $1-e$.  Hence, Bob's output always perfectly matches Alice's input.
\end{proof}

\section{The role of Bayesian inversion}

The above discussion explicitly involved Bayesian inversion.  However, the same conclusion can be obtained 
merely by looking at \em possibilities\em, that is, for which pairs of inputs certain outputs are possible. One can represent these possibilities also in a matrix,  with $0$ standing for impossible and $1$ standing for possible.  In this case the time-reverse is nothing but the transpose, e.g.~for the example of Theorem \ref{thm:classical} we have:
\beq\label{mat:classex}
g=\mbox{\small$\left(\begin{array}{cccc}
1 & 0 & 1 & 0\\
1 & 1 & 1 & 1\\ 
0 & 1 & 0 & 1\\
1 & 1 & 1 & 1
\end{array}\right)$}
\ \ \ \ \mbox{}\ \ \ \ \ \
g^T=\mbox{\small$\left(\begin{array}{cccc}
1 & 1 & 0 & 1\\
0 & 1 & 1 & 1\\ 
1 & 1 & 0 & 1\\
0 & 1 & 1 & 1
\end{array}\right)$}.
\eeq
One can show that the reduced data encoded in these matrices is sufficient to draw the same conclusion as stated in Theorem \ref{thm:classical}, but now without any reference to priors.  

\section{The case of PR-boxes} 

One may (wrongly) have assumed that the ability for time-reverses to be signaling may be a result of non-locality.  Since we already established that this phenomenon occurs classically this is not the case. Moreover, for Popuscu-R\"orlich (PR) boxes \cite{PR}, that is, maximally non-local non-signaling correlation matrices, one never achieves perfect signaling under time-reversal.

For a PR-box $pr$ and its time-reverse $pr^T$ we have:
\beq
pr=\mbox{\small$
\left(\begin{array}{cccc}
1 & 1 & 1 & 0\\
0 & 0 & 0 & 1\\
0 & 0 & 0 & 1\\
1 & 1 & 1 & 0
\end{array}\right)$} 
\ \ \ \ \mbox{}\ \ \ \ \ \
pr^T=\mbox{\small$ 
\left(\begin{array}{cccc}
1 & 0 & 0 & 1\\
1 & 0 & 0 & 1\\ 
1 & 0 & 0 & 1\\
0 & 1 & 1 & 0
\end{array}\right)$}.
\eeq
When Bob's output is $0$, then Alice's input always matches Bob's.  Hence in this case he can deduce Alice's input.  However, while we achieve signaling, we don't achieve a perfect channel, since Bob has no control over his output (and neither will Bob instead fixing his input lead to a perfect channel).

\section{Realizing time-reversal}

%\bit
%\item possibilistic
%\item For uniform prior we can implement  time-reversal
%\item Does this generalise ctc?
%\item Is post-selection necessary for time-reversal? Does this hint arise just because post-selection leads to signaling?
%\eit

Above we provided an operational  conception of time-reversal by means of reversing a videotape.  Here we will show how one can effectively \em realize \em time-reversal. Evidently, this will require signaling resources, as the outcome may be a signaling device.  The signaling resource that we will rely on is \em post-selection\em, that is, conditioning on an outcome of a probabilistic process.  

In \cite{LE} it was shown how bipartite states and postselected bipartite measurements can be used to seemingly reverse the flow of quantum data, 
%\bR which also allowed the application of operations to that data\e, 
and which was later cast as a diagrammatic formalism that encompassed  many foundational structures of quantum theory \cite{AC,Kindergarten}.  This has been built on by various authors, 
%beginning with Bennett and Schumacher \cite{BS}, 
e.g. Svetlichny \cite{Svetlichny}, in proposing post-selected quantum teleportation as a means of simulating closed timelike curves (CTCs) \footnote{Bennett and Schumacher had earlier suggested, in unpublished work, that post-selected teleportation provides a model of time travel.}, which we discuss below.
%similar ideas were presented earlier by Bennett and Schumacher.  
%\bR In \cite{CS}, with Spekkens, a similar general framework was proposed, to realize Bayesian inversion.  Here we will rely on these results. \e

Now, consider the configuration of Fig. \ref{fig:ABtime}  where the lower triangles represent bipartite states and the upper triangles bipartite effects.
\begin{figure}[H]
\centering
\includegraphics[width=140pt]{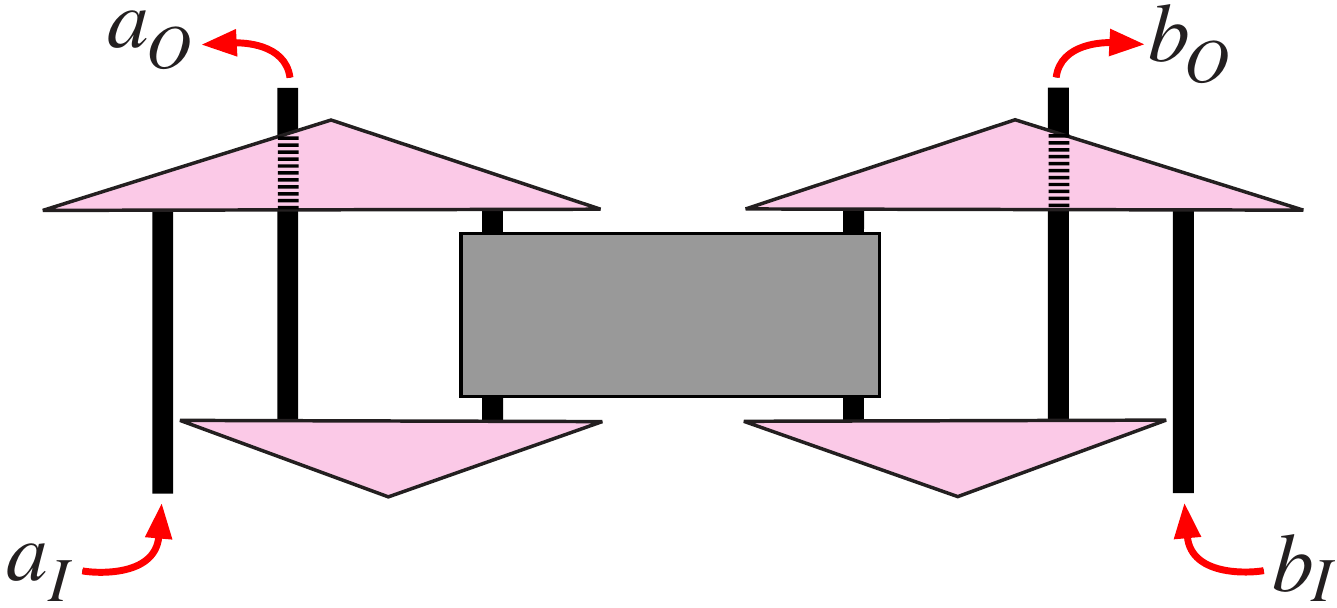}
\caption{Realization of the transpose %of a correlation matrix 
via post-selection.}
\label{fig:ABtime}
\end{figure}
\vspace{\spc}
Following  \cite{Kindergarten}, we can make the apparent `flows' of information explicit by replacing the triangles by wires:
\begin{figure}[H]
\centering
\includegraphics[width=140pt]{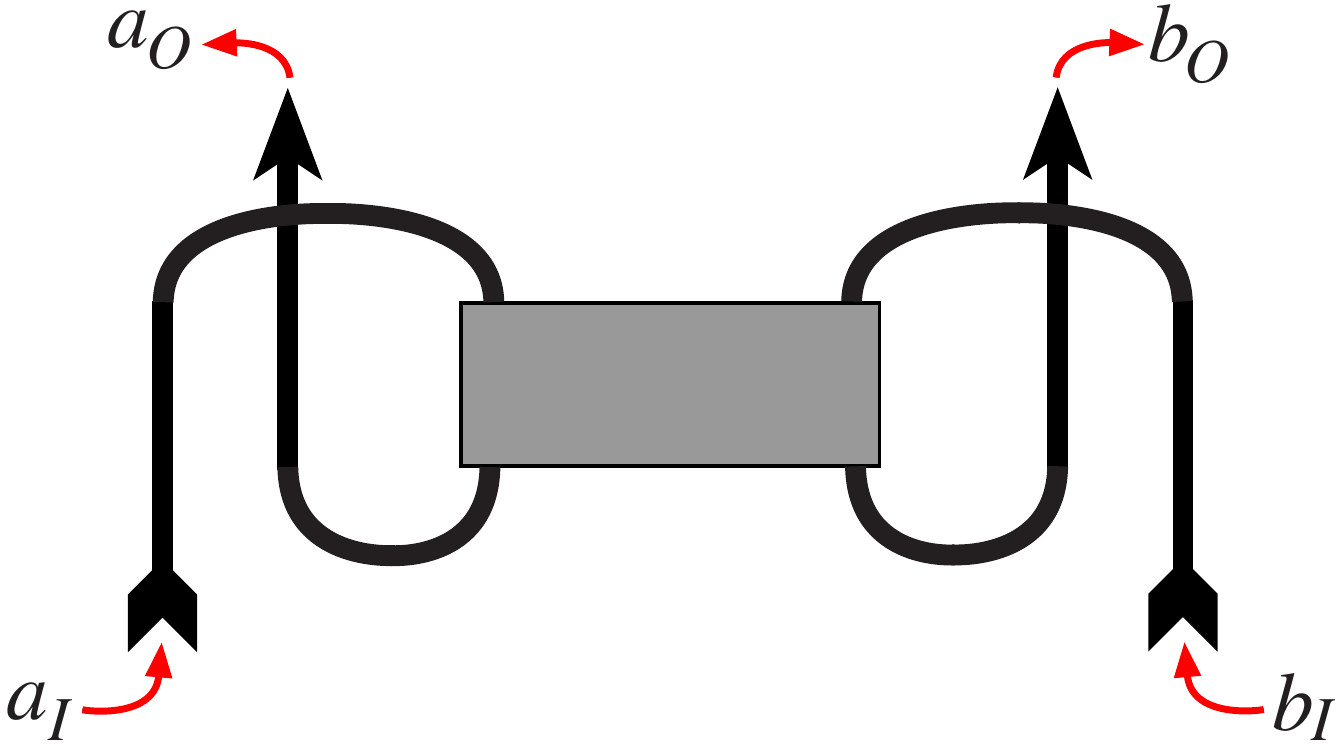}
\caption{Information flow of Fig.~\ref{fig:ABtime}.}
\label{fig:ABflow}
\end{figure}
\vspace{\spc}
Fig.~\ref{fig:ABflow} indeed shows that the inputs $a_I$ and $b_I$ are seemingly `fed' into the outputs of the device. 

This leads us to propose a realization of time-reversal as follows.  Let the states and effects of Fig~\ref{fig:ABtime}  be $\mbox{\small$\left(\begin{array}{cccc}
1 & 0 & 0 & 1
\end{array}\right)$}^T$ and $\mbox{\small$\left(\begin{array}{cccc} 
1 & 0 & 0 & 1
\end{array}\right)$}$ respectively, and consider  the original device (depicted by the grey box)  to be a probabilistic device as described above.  Then it is easily calculated that the  entire post-selected device of Fig~\ref{fig:ABtime} will produce the transpose of the original correlation matrix. Moreover,  Coecke and Spekkens \cite{CS} have shown how the configuration of Fig~\ref{fig:ABtime} can also realize Bayesian inversion: the states and effects that are now used depend on the prior $P(I)$. This produces a realization of time-reversal that is therefore relative to a particular input-output pair of distributions $(P(I),P(O))$.
% for a given prior probability distribution $P(I)$ with a particular device, the posterior distribution $P(O)$ is determined.
%\bR so given the prior $P(I)$, it yields a correlation matrix whose entries are $P(O|I)P(I)\sim P(I|O)$ where $I$ and $O$ represent inputs and outputs of the \em original  \em device. %Hence one can obtain the transpose of the correlation matrix of a device 
%Therefore we can realize the time-reversed matrix $P(I|O)$
%(e.g.~the device of \ref{thm:classical}), for a given $P(I)$, by using the original device to obtain the posterior $P(O)$, and the post-selected device of of Fig~\ref{fig:ABtime} to give the transpose with entries $P(O|I)P(I)$; the combined statistics therefore \bR determine \e $P(I|O)$.
%\e 

Note that if the device of Fig.~\ref{fig:ABtime} were to have pairs of qubits as inputs and outputs,
% and produce pairs of qubits, 
and taking the states and effect respectively to be $|00\rangle+|11\rangle$ and $\langle00|+\langle11|$, then we immediately obtain that transpose of the quantum operation. One could rely on the experimental techniques of \cite{Laflamme, Lloyd}, to effectively realize this in the lab. 
%\bR To realize Bayesian inversion one needs two copies of the device; the second one is required to produce the posterior $P(O)$ from the prior $P(I)$ as in eq.~(\ref{eq:posterior}).  
%%The case where we take $P(I)$ to be uniform, i.e.~$$\mbox{\small$\left(\begin{array}{cc} 1 & 1\end{array}\right)\times\left(\begin{array}{cc}1 & 1\end{array}\right)$}$ then the realization of 
%Bayesian inversion can be depicted as: 
%\begin{center}
%\epsfig{figure=AliceBobCS.pdf,width=140pt}
%\end{center}
%Here, the small triangles represent  the prior $P(I)$, the dot with one incoming and two outgoing wires represents
%\[
%\mbox{\small$\left(\begin{array}{cccc}1& 0&0&0\\ 0&0&0&1\end{array}\right)$}^T,
%\]
%that is, copying of the data,  and the dot with three incoming wires represents the effect 
%\[
%\mbox{\small$\left(\begin{array}{cccccccc}1& 0&0&0&0&0&0&1\end{array}\right)$}, 
%\] 
%that is, post-selecting on the fact that the output of the three wires is identical.
%%(the corresponding quantum effect would be $\langle000|+\langle111|$), and 
%%(the corresponding quantum operation would be $|00\rangle\langle0|+|11\rangle\langle1|$).
%
%If one is only interested in the possibility of correlations, which may already lead to signaling, transposition rather than the more complicated Bayesian inversion suffices.
%\e

Post-selection is hence used both in our proposal for the realization of time-reversal and in the aforementioned proposals for simulation of CTCs \cite{Svetlichny,Lloyd}. Indeed, the operation described above is actually analogous to the simulation of CTCs by post-selected quantum teleportation, which can be realized by
%that was recently put forward in \cite{Lloyd}, 
using a state and an effect 
%Fig.~\ref{fig:ABtime} to  
%If we connect the input and output  of one side of the post-selected device, 
in the following configuration: %that is the one used for post-selected quantum teleportation: 
%which in turn is used to simulated CTCs: 
\begin{figure}[H]
\centering
\includegraphics[width=120pt]{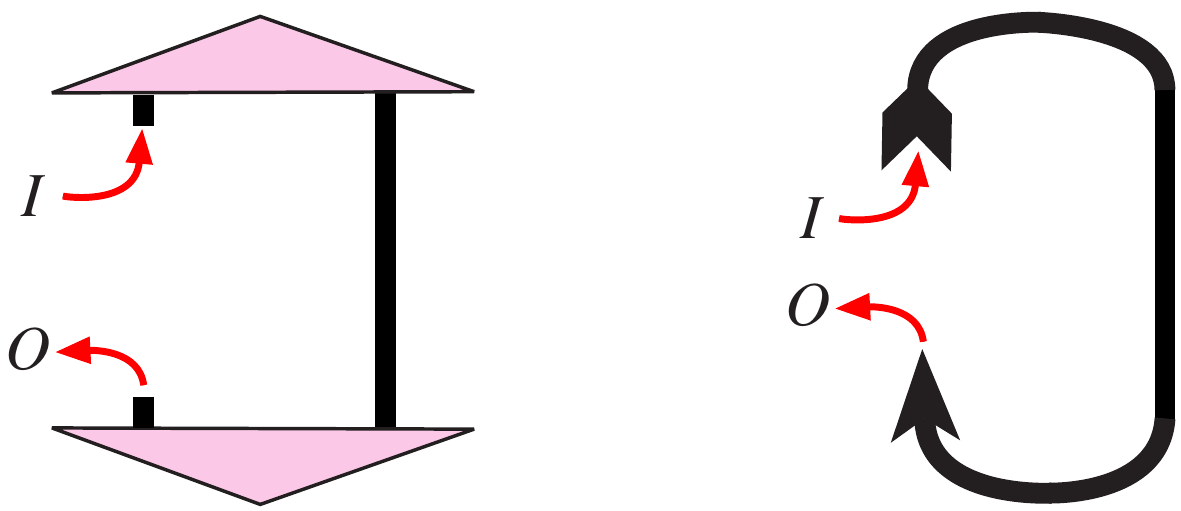}
\caption{Realization of a CTC %of a correlation matrix 
 via post-selection.}
\label{fig:CTC}
\end{figure}
\vspace{\spc}
%\caption{Realization of a CTC %of a correlation matrix 
%using post-selection.}
%\epsfig{figure=,width=140pt}
%\end{center}
On the left of Fig.~\ref{fig:CTC}, half of a Bell pair $|00\rangle+|11\rangle$ is subject to part of the entangled effect $\langle00|+\langle11|$: on the right the wires indeed show that this leads to an apparent flow of information backwards in time through a `loop'. 
 %This leads to the question of whether post-selection is a necessary requirement for implementing either of these.
\section{Discussion}

We showed that when probabilities are involved, the time-reversed picture (cf.~reversing the tape) fundamentally clashes with relativistic light cone structure.  Therefore, it seems that one needs to abandon time-symmetry for any theory that combines both relativistic aspects and 
%intrinsically involves probabilities.  
intrinsic probabilistic aspects.
Dually, an analysis of probabilistic correlations enables one to `detect' an arrow of time. Our point is not to argue for the actual existence of a signaling device that can effectively be realized, but rather that  asserting `too much'  time-related symmetry, as is for example encoded within GR, causes a contradiction with other aspects of physics. 

 In that respect, it should be noted that in the realization using a video tape and reversing it, we never will observe the perfect signaling device being used to signal: we can only deduce from the statistics that it could be potentially used for that purpose, as Bob's backward inputs will typically not always be $0$.  In contrast, in the realization using post-selection we \em will \em observe signaling for the device of Theorem \ref{thm:classical}, but note that in general this will  depend on the chosen $P(I)$: the $P(O)$ we obtain may not lead to backwards signaling. 

%Moreover, there may not be \em any \em $(P(I),P(O))$ pair that will lead to observed backwards signaling even though the device does is signaling according to Definition \ref{def:signaling}. \e
 
   For future work, 
%this suggests that it would be interesting to explore whether post-selection is always necessary for realizing temporal operations 
%the role of in experimental realizations of operations 
%such as time-reversal and CTCs.
it would be worthwhile to investigate how 
our work 
%the arrow of time we have identified relates to others that have appeared in the literature, in particular recent work on 
relates to the relationship between information processing and the thermodynamic arrow of time \cite{Maroney}, and also its relation to the signaling properties of PR boxes in the presence of CTCs \cite{Chakra}.  Also, since causal structure forms the basis for approaches to quantum gravity \cite{Sorkin}, these approaches may have to be reconsidered in the light of the results in this paper.

\bibliographystyle{apsrev}%plainnat}%apsrev4-1} %apsrev

\end{document}